\documentclass[11pt]{article}


\usepackage{geometry}
\geometry{
   a4paper,         
   textwidth=16cm,  
   textheight=22cm, 
  hratio=1:1,      
}


\usepackage{lmodern}
\usepackage[T1]{fontenc}
\usepackage[mathletters]{ucs} 
\usepackage[utf8]{inputenc}

\usepackage{graphicx}
\graphicspath{{images/}}

\usepackage{amsmath, amsthm}
\usepackage{amssymb, amsbsy}
\setcounter{tocdepth}{3}

\usepackage[dvipsnames]{xcolor}
\usepackage{hyperref}
\hypersetup{
colorlinks=true,
linkcolor=Brown,
citecolor=OliveGreen,
urlcolor=RoyalBlue,
}

\usepackage{floatflt, wrapfig}
\usepackage{xcolor}

\usepackage{graphicx}
\graphicspath{ {images/} }
\usepackage{epstopdf}
\usepackage{color, import}
\usepackage{sectsty}
\usepackage{enumitem}
\usepackage{lineno}

\usepackage{cite}
\usepackage[edges]{forest}

\usepackage[linesnumbered,vlined,ruled]{algorithm2e}
\SetAlFnt{\footnotesize}
\SetAlCapFnt{\small}
\SetAlCapNameFnt{\small}

\usepackage{setspace}
\usepackage[list=true]{subcaption}
\usepackage[title]{appendix}
\usepackage{multirow}
\usepackage{array}


\usepackage{etoolbox}
\makeatletter
\patchcmd{\@maketitle}{\begin{center}}{\begin{flushleft}}{}{}
\patchcmd{\@maketitle}{\begin{tabular}[t]{c}}{\begin{tabular}[t]{@{}l}}{}{}
\patchcmd{\@maketitle}{\end{center}}{\end{flushleft}}{}{}
\makeatother

%

%

%
  
\renewenvironment{abstract}
{\small\section*
{\bfseries\noindent{\raisebox{-.15\baselineskip}{\normalsize\abstractname}}\hrulefill} 
}

\newtheorem{theorem}{Theorem}
\newtheorem{lemma}{Lemma}



\author{hkhj\\hjk}
\author{hjjkhj\\hjk}

\begin{document}
\title{Complexity of Maximum Cut on Interval Graphs}
    \author{
    Ranendu Adhikary\\
    \small \emph{Department of Mathematics, Jadavpur University, India}\\
    \small \emph{ranenduadhikary.rs@jadavpuruniversity.in}\\\\
    Kaustav Bose\\
    \small \emph{Department of Mathematics, Jadavpur University, India}\\
    \small \emph{kaustavbose.rs@jadavpuruniversity.in}\\\\
    Satwik Mukherjee\\
    \small \emph{Department of Mathematics, Jadavpur University, India}\\
    \small \emph{m.satwik90@gmail.com}\\\\
    Bodhayan Roy\\
    \small \emph{Department of Mathematics, Indian Institute of Technology Kharagpur, India}\\
    \small \emph{broy@maths.iitkgp.ac.in}\\\\
    }
    \date{}
    
  \maketitle

\begin{abstract}
%

We resolve the longstanding open problem concerning the computational complexity of \textsc{Max Cut} on interval graphs by showing that it is NP-complete.

\end{abstract}
\section{Introduction}

For a graph $G = (V, E)$, a \emph{cut} is a partition of $V$ into two disjoint subsets. Any cut determines a \emph{cut set} which is the set of all edges that have one endpoint in one partition and the other endpoint in the other partition. The \emph{size} of a cut is the cardinality of its cut set. The maximum cut problem or \textsc{Max Cut} asks for a cut of maximum size. \textsc{Max Cut} is a fundamental and well-known NP-complete problem \cite{garey1990guide}. The weighted version of the problem is one of Karp's original 21 NP-complete problems \cite{karp1972reducibility}. Besides its theoretical importance, it has applications in VLSI circuit design \cite{chang1987}, statistical physics \cite{barahona1988application} etc. \textsc{Max Cut} remains NP-hard even for cubic graphs \cite{berman1999some}, split graphs \cite{bodlaender2000complexity}, co-bipartite graphs \cite{bodlaender2000complexity}, unit disk graphs \cite{diaz2007max} and total graphs \cite{guruswami1999maximum}. On the positive side, polynomial time algorithms are known for planar graphs \cite{hadlock1975finding}, line graphs \cite{guruswami1999maximum}, graphs not contractible to $K_5$ \cite{barahona1983max} and graphs with bounded treewidth \cite{bodlaender2000complexity}.

It is well known that many classical NP-complete problems like colourability \cite{golumbic2004}, Hamiltonian cycle \cite{keil1985finding}, minimum dominating set \cite{chang1998efficient}, minimum feedback vertex set \cite{lu1997linear}, minimum vertex cover \cite{marathe1992generalized} and maximum clique \cite{imai1983finding} are polynomial time  solvable for interval graphs. This is because interval graphs are well structured graphs with many nice properties and decomposition models that are often exploited to design efficient dynamic programming or greedy algorithms. Few problems that are known to be NP-hard in interval graphs include optimal linear arrangement \cite{cohen2006optimal}, achromatic number \cite{bodlaender1989achromatic}, harmonious colouring \cite{asdre2007harmonious}, geodetic set \cite{chakraborty2020algorithms}, minimum sum colouring \cite{marx2005short}, metric dimension \cite{foucaud2017}, identifying code \cite{foucaud2017} and locating-dominating set \cite{foucaud2017}. The class of interval graphs is widely regarded as an important graph class with many real-world applications. Interval graphs arise naturally in modelling problems that involve temporal reasoning, e.g scheduling problems. Interval graphs are also extensively used in bioinformatics (e.g. DNA mapping \cite{zhang1994algorithm}, protein sequencing \cite{jungck1982computer}) and mathematical biology (e.g. food webs in population biology \cite{cohen1978food}).

Surprisingly, the computational complexity of \textsc{Max Cut} for interval graphs is a longstanding open problem. The first time that the problem was mentioned as open was probably in 1985 \cite{JOHNSON1985434}. No polynomial time algorithm is known even for the subclass of unit interval graphs. There are two previous works \cite{bodlaender1999simple, boyaci2017polynomial} reporting polynomial time algorithms solving \textsc{Max Cut} for unit interval graphs. However, both algorithms were later reported to be incorrect \cite{bodlaender2004simple, kratochvil2020u}. In this paper, we show that \textsc{Max Cut} is NP-complete for interval graphs.

\section{Preliminaries}

For any simple undirected graph $G = (V, E)$, a \emph{cut} is a partition of $V$ into two disjoint subsets $A$ and $B$, i.e., $V = A \cup B$ and $A \cap B = \emptyset$. The corresponding \emph{cut set} is the set of all edges that have one endpoint in $A$ and the other endpoint in $B$, i.e., the set $\{(u,v) \in E \mid (u \in A, v \in B) \vee (u \in B, v \in A) \}$. The \emph{size} of the cut is the cardinality of its cut set. A typical instance of the decision version of \textsc{Max Cut} consists of a simple undirected graph $G = (V, E)$ and an integer $k$ such that $1 \leq k \leq |E|$. $(G, k)$ is an yes-instance of \textsc{Max Cut} if and only if $G$ has a cut of size at least $k$. 

Interval graphs are the intersection graphs of intervals on the real line. Formally, $G = (V, E)$ is said to be an \emph{interval graph} if there is a set $S$ of intervals on the real line and a bijection $\varphi : V \longrightarrow S$ such that $u, v \in V$ are adjacent if and only if $\varphi(u) \cap \varphi(v) \neq \emptyset$.

\section{NP-Completeness} 
In this section, we show that \textsc{Max Cut} is NP-complete on interval graphs. \textsc{Max Cut} is known to be NP-complete on cubic graphs \cite{berman1999some}. We reduce \textsc{Max Cut} on cubic graphs to \textsc{Max Cut} on interval graphs.

\subsection{Construction of the Reduction Graph} \label{sec: reduction}
%
%
%
%
Let $(G, x)$ be an instance of \textsc{Max Cut} where $G = (V, E)$ is a cubic graph. Let $|V| = n$ and hence $|E| = \frac{3}{2}n$. We shall reduce it to an equivalent instance $(G', f(x))$ of \textsc{Max Cut} where $G' = (V', E')$ is an interval graph. The construction of $G'$ is outlined in the following. $G' = (V', E')$ is described as the intersection graph of a set of intervals on the real line and the vertices of $G'$ are referred to as intervals.

\begin{enumerate}
 \item Fix an arbitrary ordering of the vertices and edges of $G = (V, E)$ as $v_1, v_2, \ldots, v_n, e_1, e_2,$ $\ldots, e_m$. We shall write any edge $e \in E$ as an ordered pair of vertices that respects the following convention. If $e$ is an edge between $v_i$ and $v_j$, where $i < j$, then we shall write $e = (v_i, v_j)$ (not $e = (v_j, v_i)$).
 
 
 \item For each vertex $v \in V$, we construct a V-gadget $\mathcal{G}(v)$ and for each edge $e \in E$, we construct an E-gadget $\mathcal{G}(e)$. They are shown in Fig. \ref{fig: gadgets}. The structure of a V-gadget is identical to that of an E-gadget, the only difference is their size. Each V-gadget (E-gadget) consists of $q$ (resp. $q'$) left long intervals, $p$ (resp. $p'$) left short intervals, $q$ (resp. $q'$) right long intervals and $p$ (resp. $p'$) right short intervals. The left long intervals and the right long intervals of a V-gadget (E-gadget) all intersect each other to form a clique of size $2q$ (resp. $2q'$). All left short intervals of a V-gadget (E-gadget) are mutually disjoint and each of them intersect only the $q$ (resp. $q'$) left long intervals. Similarly all right short intervals of a V-gadget (E-gadget) are mutually disjoint and each of them intersect only the $q$ (resp. $q'$) right long intervals. Therefore, the number of edges in each V-gadget (E-gadget) is $q(2q-1) + 2pq$ (resp.  $q'(2q'-1) + 2p'q'$). 
 

   \begin{figure}[h!]
\centering
\subcaptionbox[Short Subcaption]{
        \label{1}
}
[
    0.46\textwidth 
]
{
    \fontsize{8pt}{8pt}\selectfont
    \def\svgwidth{0.46\textwidth}
    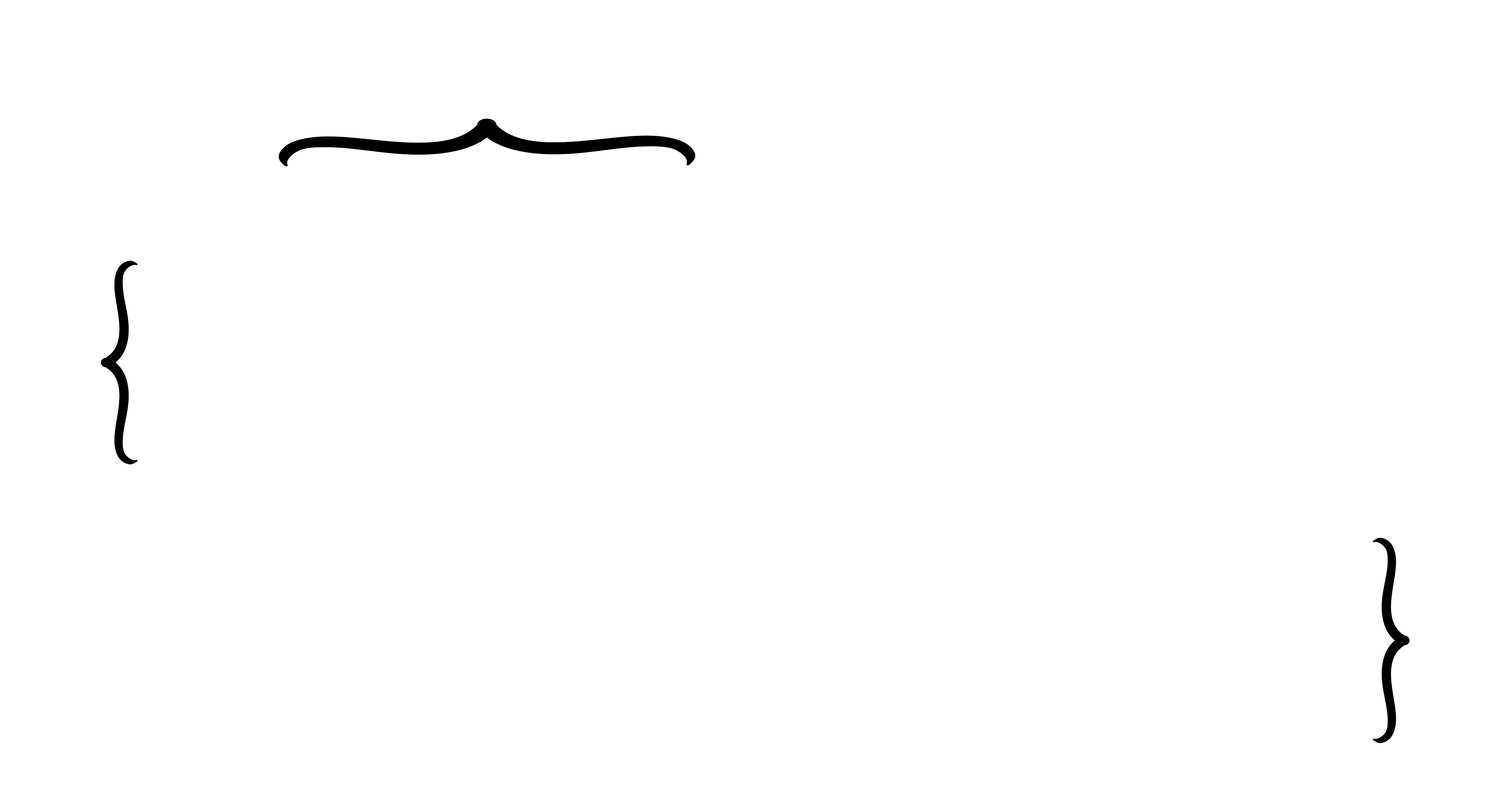
}
\hfill 
\subcaptionbox[Short Subcaption]{
     \label{2}
}
[
    0.46\textwidth 
]
{
    \fontsize{8pt}{8pt}\selectfont
    \def\svgwidth{0.46\textwidth}
    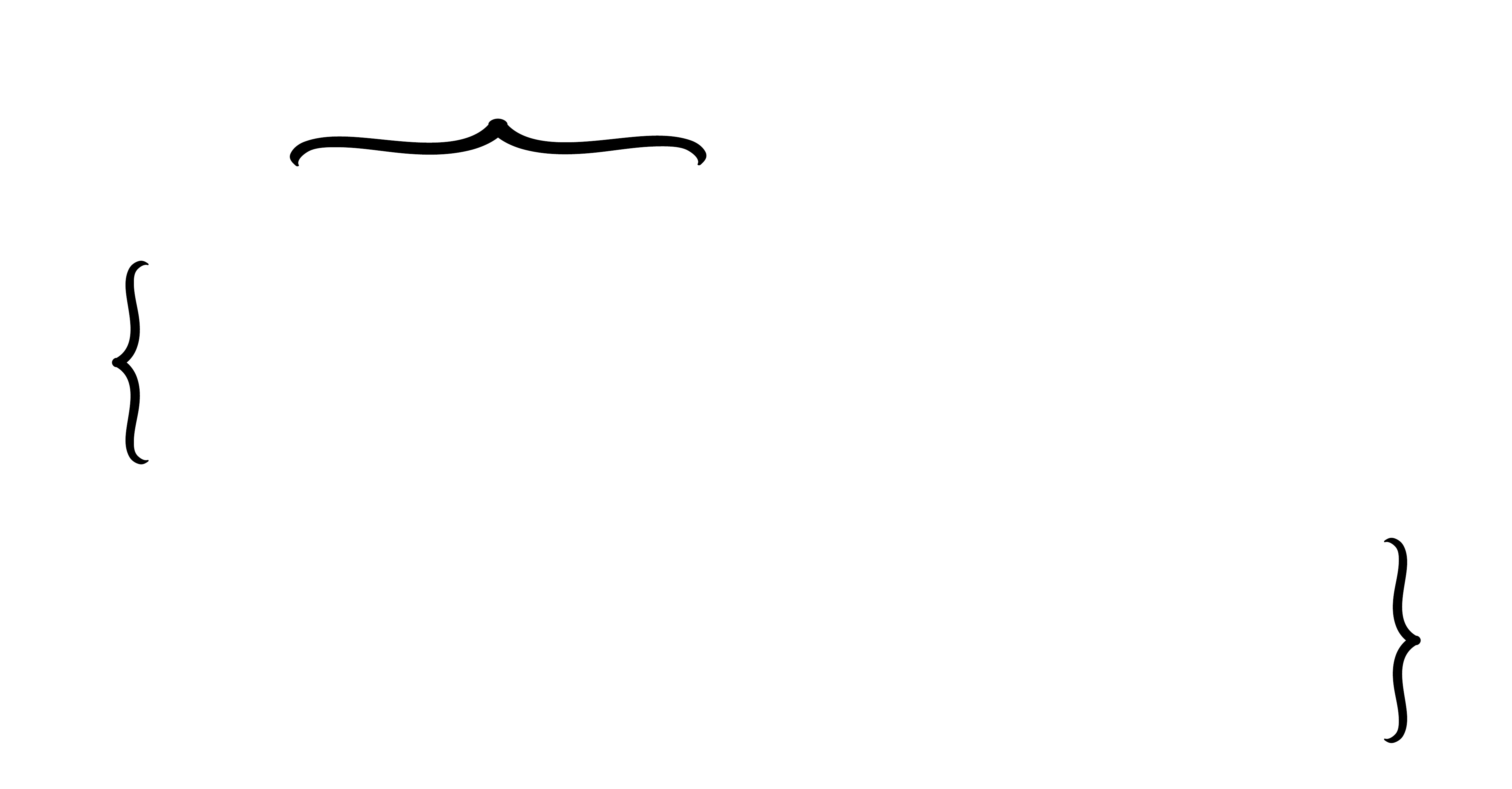
}

\caption[Short Caption]{a)  A V-gadget. b)  An E-gadget.}
\label{fig: gadgets}
\end{figure}

\item
 We set $q = 200n^3 + 1$, $p=2q+7n$, $q' = 10n^2 + 1$, $p' = 2q'+7n$, where $n$ is the number of vertices in $G$. Note that the following inequalities hold:
 \begin{enumerate} 
  \item $p>2q>2p'>4q'> 9n^2$
  \item $q^2>(p-q)6n$
  \item $q'^2>(p'-q')6n$
  \item $q > 3(p' + q')n + 9n^2$

  \end {enumerate}

\item There are a total of $n$ V-gadgets, and $3n/2$ E-gadgets. 
All $5n/2$ gadgets are arranged in the following order as shown in Fig. \ref{fig: arrangement} : $\mathcal{G}(v_1), \mathcal{G}(v_2), \ldots, \mathcal{G}(v_n),$ $\mathcal{G}(e_1), \mathcal{G}(e_2), \ldots, \mathcal{G}(e_{3n/2})$. No two intervals belonging to different gadgets intersect.

 \item To establish relationships between the V-gadgets and E-gadgets we introduce $6n$ link intervals (See Fig. \ref{fig: arrangement}). Link intervals connect V-gadgets to E-gadgets. This will be described in the next point. A link interval can intersect a gadget in four different ways as described in the following.
 
 \begin{itemize}
  \item A link interval is said to \emph{cover a gadget} if it intersects all intervals of the gadget. (See Fig. \ref{cover})
  
  \item A link interval is said to \emph{intersect a V-gadget in the first manner} if it intersects only the $q$ right long intervals of the V-gadget. (See Fig. \ref{intersect1}). 
  
  \item A link interval is said to \emph{intersect an E-gadget in the second manner} if it intersects only the $p'$ left long intervals of the gadget. (See Fig. \ref{intersect2}). 
  
  \item A link interval is said to \emph{intersect an E-gadget in the third manner} if it intersects only the $q'$ left long intervals and the $p'$ left short intervals of the gadget. (See Fig. \ref{intersect3}). 
 \end{itemize}

 \item For each edge $e = (v_i, v_j) \in E$, we introduce four link intervals: 1) a pair intersecting $\mathcal{G}(v_i)$ in the first manner and $\mathcal{G}(e)$ in the second manner, and 2) another pair intersecting $\mathcal{G}(v_j)$ in the first manner and $\mathcal{G}(e)$ in the third manner  (See Fig. \ref{fig: connection}). Note that since $G$ is cubic, the total number of link intervals covering a V-gadget is $6k$ for some integer $k$, where $k$ may vary from $0$ to $n-1$. Similarly, the total number of link intervals covering an E-gadget is $4k$ for some integer $k$, where $k$ may vary from $0$ to $3n/2-1$. Also, the total number of link intervals intersecting a V-gadget in the first manner is $6$.
 \end{enumerate}


\begin{figure}[h!]
\centering

{
    \fontsize{8pt}{8pt}\selectfont
    \def\svgwidth{1\textwidth}
    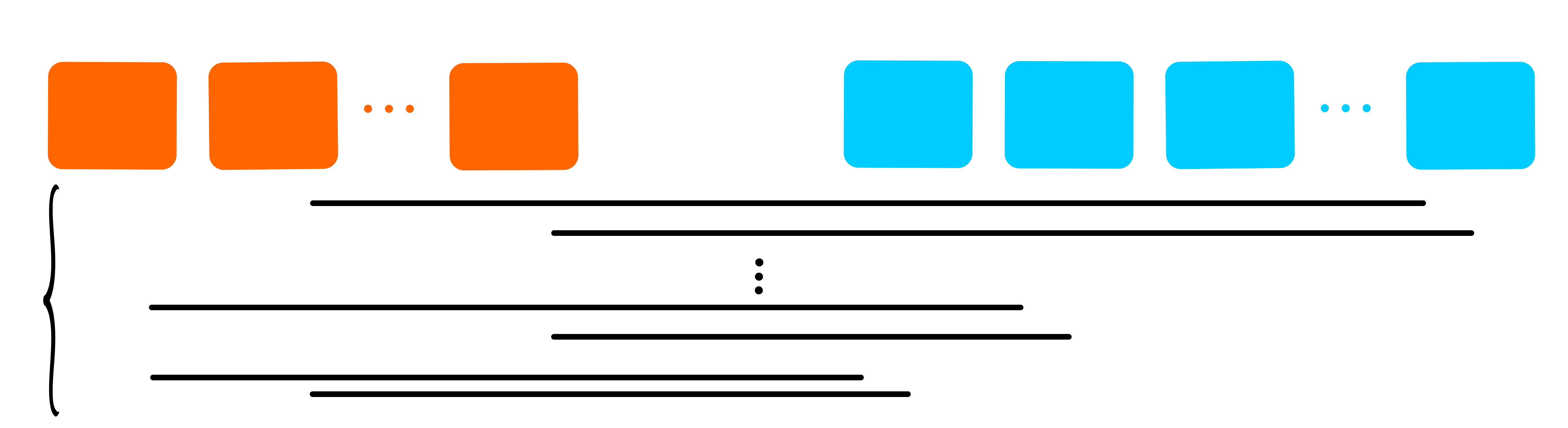
}

\caption[Short Caption]{Arrangement of the gadgets and the link intervals.}
\label{fig: arrangement}
\end{figure}

   \begin{figure}[h!]
\centering
\subcaptionbox[Short Subcaption]{A gadget is covered by a link interval.
        \label{cover}
}
[
    0.4\textwidth 
]
{
    \fontsize{8pt}{8pt}\selectfont
    \def\svgwidth{0.4\textwidth}
    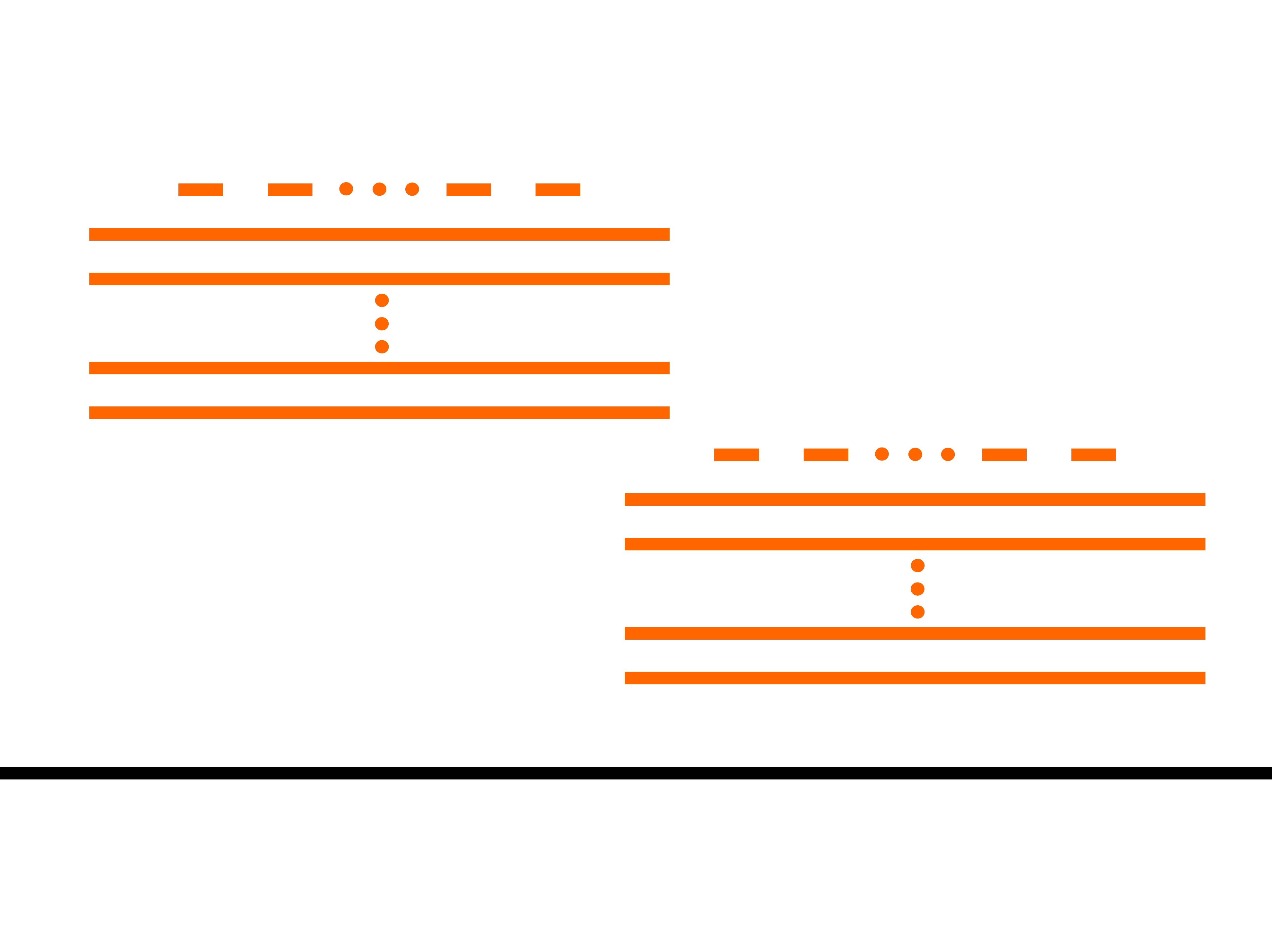
}
\hfill 
\subcaptionbox[Short Subcaption]{A link interval intersects a V-gadget in the first manner.
     \label{intersect1}
}
[
    0.4\textwidth 
]
{
    \fontsize{8pt}{8pt}\selectfont
    \def\svgwidth{0.4\textwidth}
    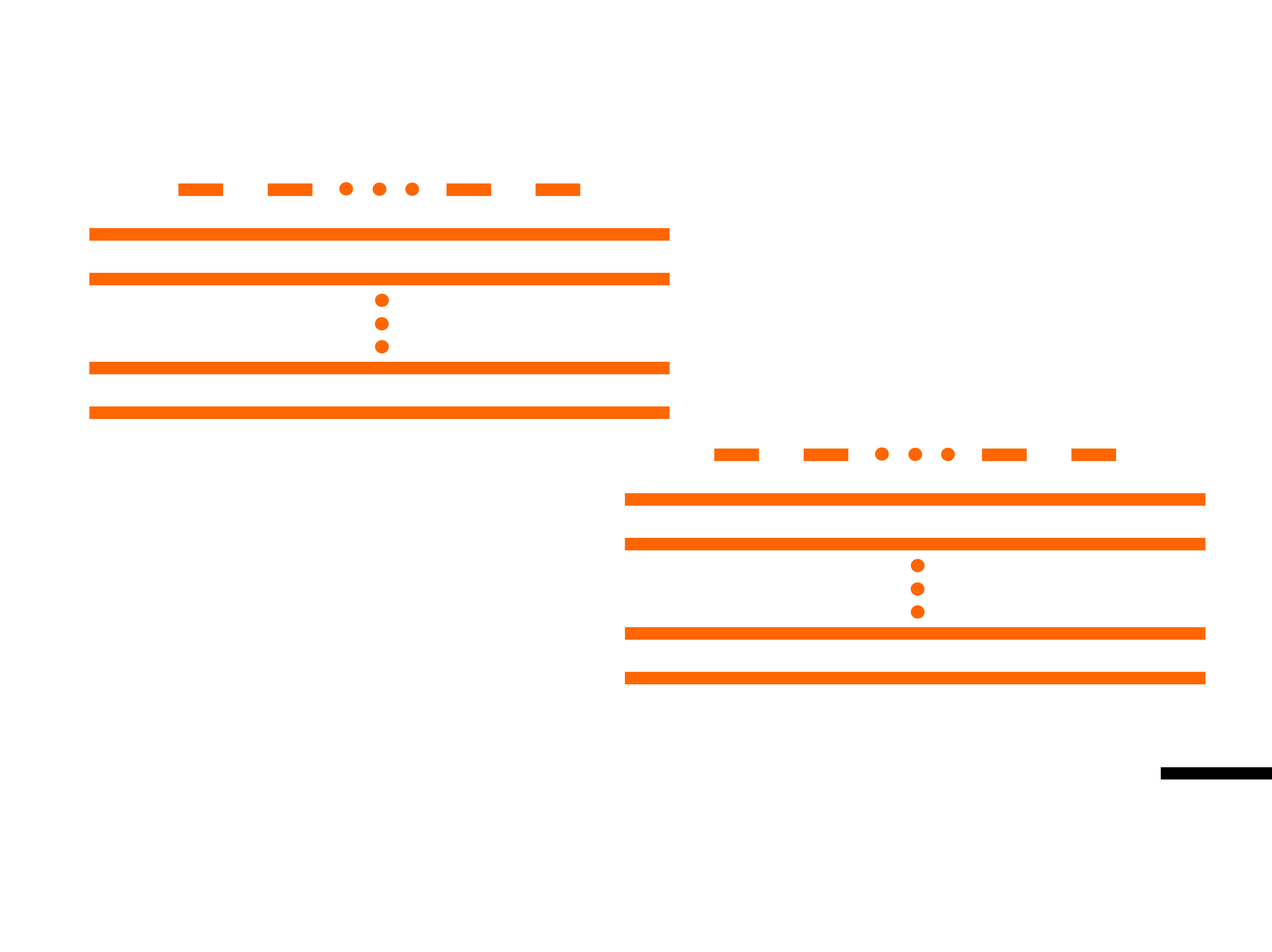
}
\\
\subcaptionbox[Short Subcaption]{A link interval intersects an E-gadget in the second manner.
         \label{intersect2}
}
[
    0.4\textwidth 
]
{
    \fontsize{8pt}{8pt}\selectfont
    \def\svgwidth{0.4\textwidth}
    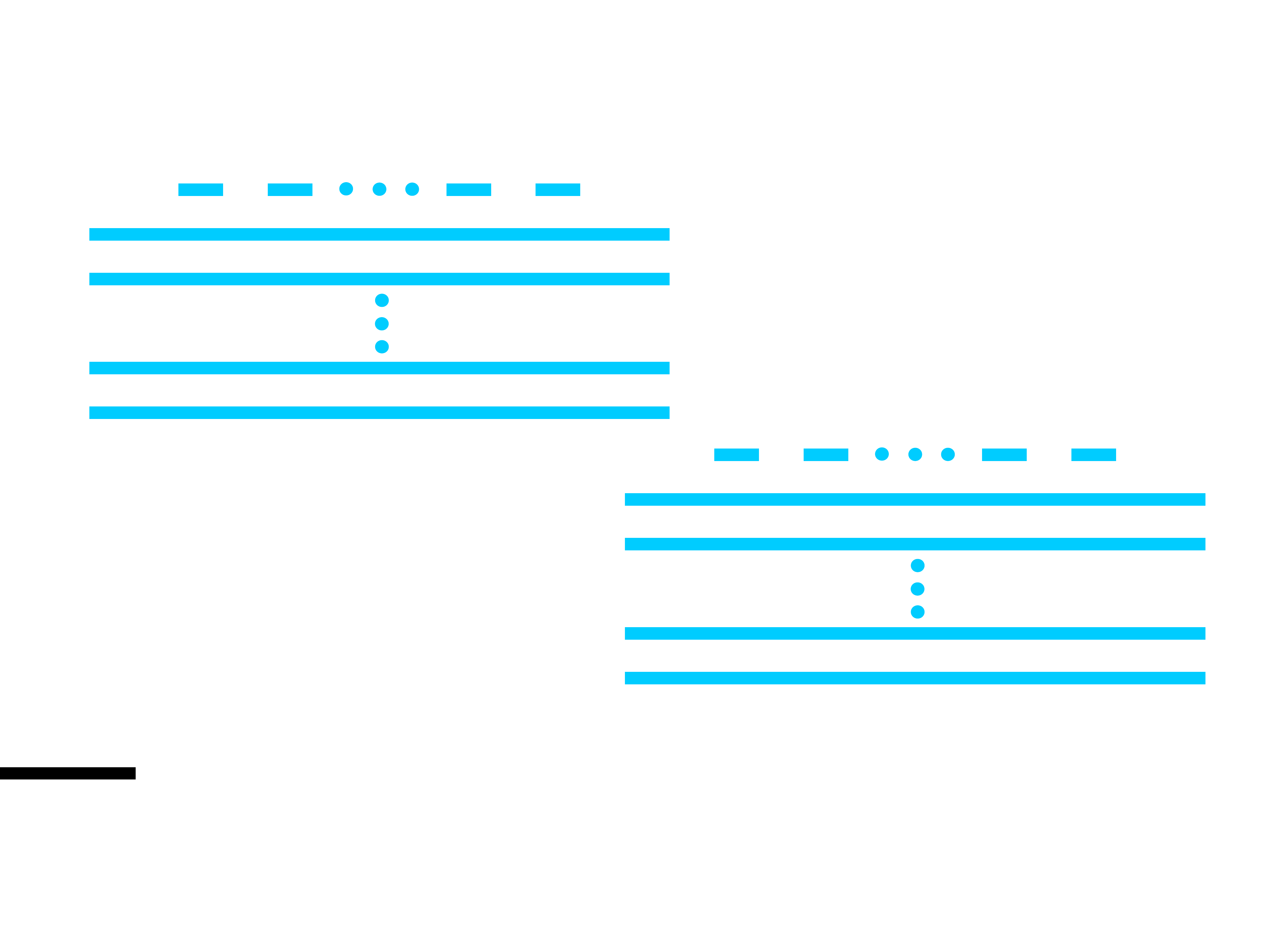
}
\hfill
\subcaptionbox[Short Subcaption]{A link interval intersects an E-gadget in the third manner.
        \label{intersect3}
}
[
    0.4\textwidth 
]
{
    \fontsize{8pt}{8pt}\selectfont
    \def\svgwidth{0.4\textwidth}
    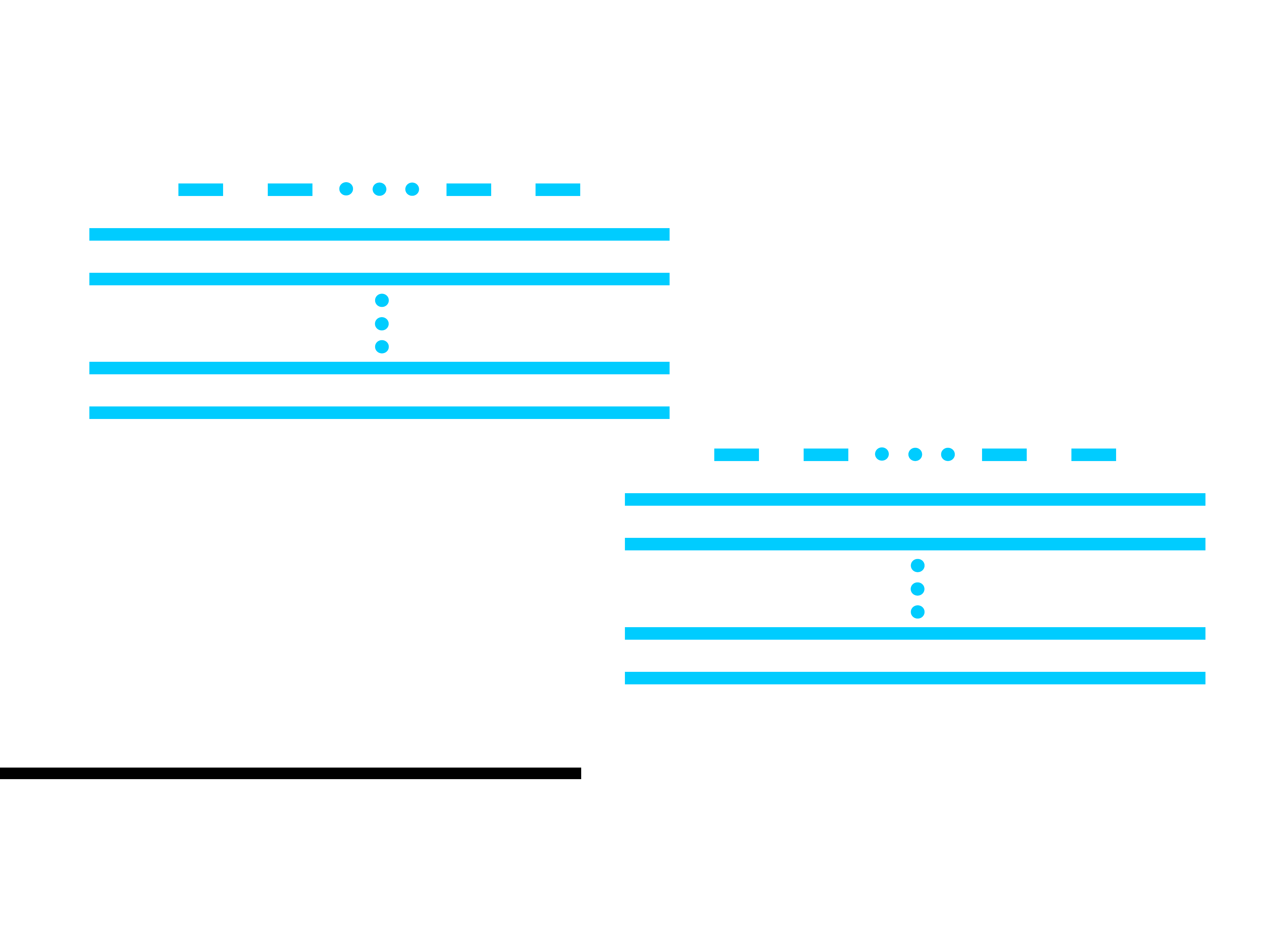
}

\caption[Short Caption]{Illustrations showing the four different ways a link interval can intersect a gadget. }
\label{notcollision}
\end{figure}

\begin{figure}[h!]
\centering

{
    \fontsize{8pt}{8pt}\selectfont
    \def\svgwidth{1\textwidth}
    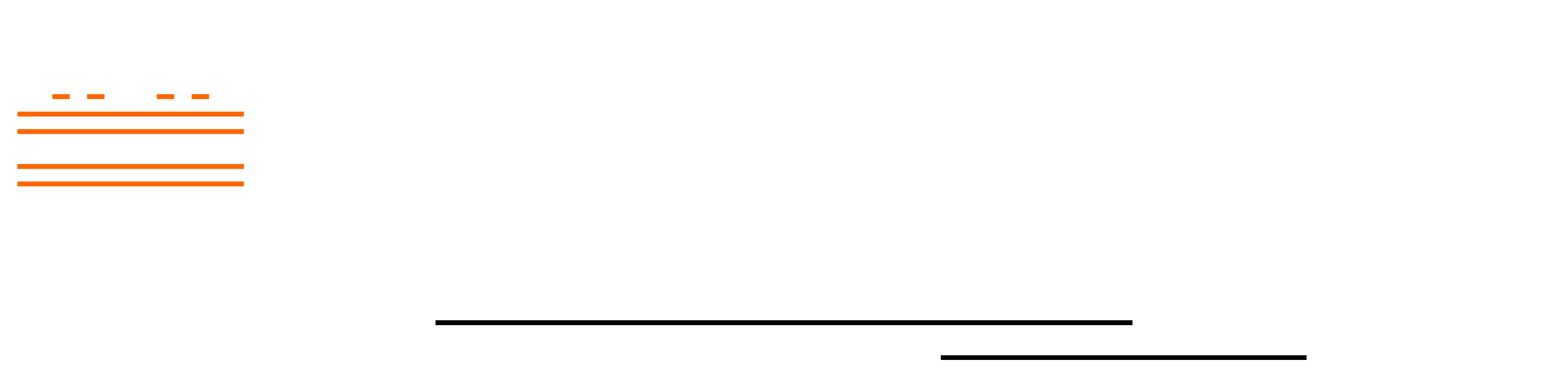
}

\caption[Short Caption]{link intervals connecting an E-gadget $\mathcal{G}((v_i, v_j))$ with V-gadgets $\mathcal{G}(v_i)$ and $\mathcal{G}(v_j)$.  }
\label{fig: connection}
\end{figure}

\subsection{Properties of the Reduction Graph} \label{sec: proof}

In this section, we study some properties of the interval graph $G'$ constructed from $G$ in the previous section. We consider a partition of vertices of $G'$ that yields a maximum cut. To prove that the partition satisfies some properties, in general we show that if it does not satisfy those properties, then the size of the corresponding cut can be increased, contradicting the maximality of the cut.


Now consider a maximum cut of $G'$ with the vertices partitioned into subsets $A$ and $B$. We show in the next lemma that for every vertex gadget $\mathcal{G}(v_i)$, either $A$ or $B$ contains all of its left short intervals. The same holds for the right short intervals of $\mathcal{G}(v_i)$.

\begin{lemma} \label{lem1}
 If a partition of $G'$ yields a maximum cut, then for any V-gadget $\mathcal{G}(v_i)$, all of its left short intervals lie in the same subset. The same holds for its right short intervals.
\end{lemma}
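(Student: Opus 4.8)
The plan is to argue by contradiction using a standard local-move/exchange argument. Suppose a partition $(A,B)$ yields a maximum cut, but some V-gadget $\mathcal{G}(v_i)$ has its left short intervals split, say $a \geq 1$ of them lie in $A$ and $b \geq 1$ lie in $B$, with $a + b = p$. Recall that each left short interval is adjacent exactly to the $q$ left long intervals of $\mathcal{G}(v_i)$ and to nothing else. Let $\alpha$ of those $q$ left long intervals lie in $A$ and $\beta = q - \alpha$ lie in $B$. Then the total contribution of the $p$ left-short-to-left-long edges to the cut is $a\beta + b\alpha$. The idea is that moving all left short intervals to one side strictly increases this quantity unless we are already in a configuration that cannot be maximal for a different reason.

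First I would fix the left long intervals and ask: given $\alpha, \beta$ with $\alpha + \beta = q$, what is the best placement of the $p$ left short intervals? Each left short interval individually should go to the side opposite the majority of $\{$left long intervals$\}$; i.e., all $p$ of them belong in $A$ if $\beta > \alpha$, all in $B$ if $\alpha > \beta$, and it is indifferent only if $\alpha = \beta$. So the only way a split can occur in a maximum cut is if $\alpha = \beta = q/2$ — but $q = 200n^3 + 1$ is odd, so $\alpha \neq \beta$ always, and the split strictly loses at least $\min(a,b) \cdot |\alpha - \beta| \geq 1$ edges compared to moving the minority short intervals to the other side. This already gives the contradiction for the left short intervals, and the right short intervals are handled identically (each right short interval is adjacent only to the $q$ right long intervals of $\mathcal{G}(v_i)$).

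I would write the argument cleanly as: pick the minority class among the left short intervals of $\mathcal{G}(v_i)$ (without loss of generality those in $B$, with $b \leq p/2$, so $b \leq \alpha$ or we can relabel so that the side they move to has the larger count of left long intervals), move all of them to the other side, and observe that the change in cut size is exactly $b(\alpha - \beta)$ if they move from $B$ to $A$ with $\alpha \geq \beta$ — wait, one must be careful: moving short intervals does not affect any edges except the short-long edges within this gadget, since short intervals have no other neighbours. So the net change is precisely $b\,|\alpha - \beta| - 0 > 0$ when $b \geq 1$ and $\alpha \neq \beta$. Since $q$ is odd, $\alpha \neq \beta$, contradiction. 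The same computation with $p$ replaced by $p$ and the right long intervals gives the claim for right short intervals.

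The main thing to be careful about — though it is not really an obstacle — is the bookkeeping of which edges are affected by the move: the whole force of the argument is that left short intervals of $\mathcal{G}(v_i)$ are adjacent to nothing except the $q$ left long intervals of the same gadget (this is exactly the gadget structure fixed in Section~\ref{sec: reduction}), so relocating them is a purely local operation and its effect on the cut is a single clean bilinear expression. The parity of $q$ (and of $q'$ later, for E-gadgets) is doing the essential work: it rules out the indifferent case $\alpha = \beta$. So the proof is: (i) note short intervals are leaves onto the long-interval clique side; (ii) write the contribution as a bilinear form; (iii) observe any split is strictly improvable unless $\alpha = \beta$; (iv) use oddness of $q$ to exclude that; (v) repeat verbatim for the right side.
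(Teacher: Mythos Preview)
Your argument has a genuine gap: the claim that ``each left short interval is adjacent exactly to the $q$ left long intervals of $\mathcal{G}(v_i)$ and to nothing else'' is false. You have overlooked the link intervals that \emph{cover} $\mathcal{G}(v_i)$. By construction (point~5 in Section~\ref{sec: reduction}), a covering link interval intersects \emph{all} intervals of the gadget, including every left short interval; and by point~6 each V-gadget is covered by $6k$ link intervals for some $0\le k\le n-1$. So the neighbourhood of a left short interval consists of the $q$ left long intervals \emph{together with} these $6k$ covering link intervals. Your bilinear computation $a\beta+b\alpha$ therefore ignores real edges, and the ``purely local operation'' you rely on is not purely local.

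The paper's proof handles exactly this: it groups the covering link intervals with the left long intervals (the sets $OL^A_i,OL^B_i$), obtains the strict inequality from the fact that the total $q+6k$ is odd (since $q$ is odd), and then runs your exchange argument on that combined count. Your approach is the right one and the oddness of $q$ is indeed the crucial point; you simply need to enlarge the set of neighbours you are counting. Once you add the covering link intervals on both the left and right sides (note that link intervals intersecting $\mathcal{G}(v_i)$ in the first manner touch only the right \emph{long} intervals, not the right short ones, so they do not enter the right-side count either), the parity argument goes through unchanged.
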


\begin{proof}
Consider a maximum cut of $G'$ that partitions its vertices into subsets $A$ and $B$. Let $LL^A_i$ and $LL^B_i$ denote the subset of left long intervals of $\mathcal{G}(v_i)$ in $A$ and $B$ respectively. Denote by $OL^A_i$ (resp. $OL^B_i$) the set of all link intervals that cover $\mathcal{G}(v_i)$ and lie in subset $A$ (resp. $B$). Without loss of generality, let the following direction of inequality hold: 
 $$\mid LL^A_i \mid + \mid OL^A_i \mid > \mid LL^B_i \mid + \mid OL^B_i \mid$$
 Note that the inequality must be strict since the sum of the number of left long intervals of $\mathcal{G}(v_i)$ and the number of  link intervals covering $\mathcal{G}(v_i)$ is $6k + q$ ($0 \leq k \leq n-1$), which is odd since $q$ is odd. Suppose that a left short interval of $\mathcal{G}(v_i)$ is in $A$. Recall that the left long intervals of $\mathcal{G}(v_i)$ and the link intervals covering $\mathcal{G}(v_i)$ are the only intervals that a left short interval of $\mathcal{G}(v_i)$ intersects. Then due to the above inequality, moving the left short interval to $B$ increases the number of cut edges. This contradicts the fact that the partition yields a maximum cut. Hence, all left short intervals of $\mathcal{G}(v_i)$ must be in $B$. Using similar arguments we can show that all right short intervals of $\mathcal{G}(v_i)$ must be same subset.
\end{proof}


%
%
%
%
%

In the following lemma, we prove a property of the long intervals of each vertex gadget, akin to the property of the short intervals proved in the previous lemma.

\begin{lemma} \label{lem2}
If a partition of $G'$ yields a maximum cut, then for any V-gadget $\mathcal{G}(v_i)$, all its left long intervals lie in the same subset. The same holds for its right long intervals.
%
\end{lemma}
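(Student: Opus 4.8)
The plan is to combine Lemma~\ref{lem1} with the way the parameter $p$ was deliberately chosen to exceed half the degree of any long interval of a V-gadget. First I would read off, from the construction, the exact neighbourhood in $G'$ of a long interval of $\mathcal{G}(v_i)$. A left long interval is adjacent precisely to: the other $q-1$ left long intervals and the $q$ right long intervals of $\mathcal{G}(v_i)$ (together forming the $2q$-clique), all $p$ left short intervals of $\mathcal{G}(v_i)$ (each left short interval meets every left long interval), and all $6k$ link intervals that cover $\mathcal{G}(v_i)$, where $0\le k\le n-1$. It meets nothing else: intervals of other gadgets do not meet it, and the only non-covering link intervals touching a V-gadget are the six intersecting it in the first manner, which meet only its right long intervals. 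Hence a left long interval has degree $d_L = (q-1)+q+p+6k = 2q-1+p+6k$, while a right long interval has the same neighbourhood plus those six first-manner link intervals, so degree $d_R = d_L+6$.

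Next I would fix a maximum cut $(A,B)$ and argue by local optimality, writing $n_A(\cdot)$ and $n_B(\cdot)$ for the numbers of neighbours in $A$ and in $B$. By Lemma~\ref{lem1}, all $p$ left short intervals of $\mathcal{G}(v_i)$ lie in one part, say $B$. Suppose some left long interval $\ell$ also lies in $B$. Then its $B$-neighbours already include all $p$ left short intervals, so $n_B(\ell)\ge p$, hence $n_A(\ell)= d_L-n_B(\ell)\le d_L-p = 2q-1+6k$. Since $p=2q+7n$ and $6k\le 6(n-1)$, we get $p = 2q+7n > 2q-1+6k$, that is $n_B(\ell) > n_A(\ell)$; moving $\ell$ from $B$ to $A$ would strictly increase the cut, contradicting maximality. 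Hence every left long interval of $\mathcal{G}(v_i)$ lies in $A$, so in particular all of them lie in the same part.

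The right long intervals are handled identically. By Lemma~\ref{lem1} all $p$ right short intervals of $\mathcal{G}(v_i)$ lie in one part, say $Y$, and each of them is adjacent to every right long interval. If a right long interval $s$ lay in $Y$, then $n_Y(s)\ge p$, so its number of neighbours in the other part is at most $d_R-p = 2q+5+6k$; the same choice $p=2q+7n$ with $6k\le 6(n-1)$ gives $p > 2q+5+6k$ — this is exactly where the extra $+6$ in the degree is absorbed — so $s$ has strictly more neighbours in $Y$ than outside it, and moving $s$ out of $Y$ would increase the cut. Thus every right long interval of $\mathcal{G}(v_i)$ lies in the part not containing its right short intervals, hence all in the same part, completing the proof.

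I do not expect a genuine obstacle: the statement rests on the single quantitative fact that $p$ outweighs the remaining $\approx 2q+6n$ neighbours of any long interval, so a maximum cut cannot place a long interval on the same side as the $p$ short intervals that Lemma~\ref{lem1} forces together. The only thing requiring care is the combinatorial bookkeeping of that neighbourhood — using the list of four admissible ways a link interval may meet a gadget to be certain that a left long interval meets no first-manner link interval while a right long interval meets all six, and that no interval of another gadget is involved. Note that this argument uses only the value $p=2q+7n$ and the bound $k\le n-1$; the finer parameter inequalities (b)--(d) are not needed here.
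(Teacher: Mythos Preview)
Your proof is correct and follows essentially the same route as the paper: invoke Lemma~\ref{lem1} to force all $p$ short intervals on one side together, then use the local-improvement argument that $p=2q+7n$ exceeds the remaining $\approx 2q+6n$ neighbours of a long interval, so a long interval placed with its short intervals could be moved to strictly enlarge the cut. Your bookkeeping is in fact slightly tighter than the paper's --- you explicitly separate the degrees $d_L$ and $d_R=d_L+6$ and verify the inequality in each case --- whereas the paper absorbs both into a single bound of ``at most $6n$'' link-interval neighbours.
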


\begin{proof}
 Consider a maximum cut of $G'$ that partitions its vertices into subsets $A$ and $B$. By Lemma \ref{lem1}, all of the short intervals on the same side of $\mathcal{G}(v_i)$ belong to the same subset. Without loss of generality, we consider two cases, where (a) all the left short intervals of $\mathcal{G}(v_i)$ are in $A$, and all the right short intervals of $\mathcal{G}(v_i)$ are in $B$, and (b) all the short intervals of $\mathcal{G}(v_i)$ are in $A$.
 
First consider Case (a), where all the left short intervals of $\mathcal{G}(v_i)$ belong to $A$, and all the right short intervals of $\mathcal{G}(v_i)$ belong to $B$. Suppose that a left long interval of $\mathcal{G}(v_i)$ is in $A$. Then moving it to $B$ results in losing at most $2q-1$ cut edges due to its intersections with other long intervals of $\mathcal{G}(v_i)$, and at most $6n$ cut edges due to its intersections with the link intervals of $\mathcal{G}(v_i)$.
However, we gain at most $p$ cut edges. Since $p=2q+7n$, the quantity $p-(2q-1+6n)$ is positive, hence the size of the cut increases. This contradicts the fact that the partition yields a maximum cut. Hence, all left long intervals of $\mathcal{G}(v_i)$ must be in $B$.

Now consider Case (b), where all the short intervals of $\mathcal{G}(v_i)$ belong to $A$. It can be seen that the above argument is also applicable in this case, and the claim holds. 
\end{proof}

In the following lemma, we consolidate the results obtained above into a complete partition of a vertex gadget in a maximum cut.

\begin{lemma} \label{lem3}
If a partition of $G'$ yields a maximum cut, then for any V-gadget $\mathcal{G}(v_i)$, all the left long and right short intervals are in one subset, while all the right long and left short intervals are in the other.

\end{lemma}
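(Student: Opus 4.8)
The plan is to deduce Lemma~\ref{lem3} by stacking Lemmas~\ref{lem1} and~\ref{lem2} on top of one more exchange argument. By those two lemmas, in a maximum cut with parts $A$ and $B$ each of the four groups of $\mathcal{G}(v_i)$ -- the $q$ left long, the $q$ right long, the $p$ left short and the $p$ right short intervals -- lies entirely inside $A$ or entirely inside $B$. So only finitely many configurations remain, and it suffices to determine how these four monochromatic blocks are positioned relative to one another.

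First I would note that the left short intervals lie opposite the left long intervals, and likewise the right short intervals lie opposite the right long intervals. This is really the mechanism already used in Lemma~\ref{lem1}: the only neighbours of a left short interval of $\mathcal{G}(v_i)$ are the $q$ left long intervals of $\mathcal{G}(v_i)$ together with the at most $6(n-1)$ link intervals covering $\mathcal{G}(v_i)$; since by Lemma~\ref{lem2} all $q$ left long intervals lie on one side and $q$ is far larger than the number of covering link intervals, that side carries strictly more neighbours of each left short interval, so maximality forces every left short interval onto the opposite side. The same argument works on the right.

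It remains to show that the left long and right long intervals lie on opposite sides. Suppose instead that they are on the same side, say $A$; then by the previous paragraph both short groups lie in $B$. I would now alter the cut by moving all $q$ right long intervals of $\mathcal{G}(v_i)$ from $A$ to $B$ and simultaneously moving all $p$ right short intervals from $B$ to $A$, and account for the change. The left long--right long edges, previously uncut and now cut, give a gain of $q^2$; the right long--right short edges stay cut, so contribute nothing; the edges from these two groups to the link intervals covering $\mathcal{G}(v_i)$ contribute a net change of $-(p-q)\bigl(|OL^A_i|-|OL^B_i|\bigr)$, whose absolute value is at most $(p-q)\cdot 6n$; and the edges from the right long intervals to the six link intervals meeting $\mathcal{G}(v_i)$ in the first manner change by at most $6q$ in absolute value. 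Every other edge incident to the moved intervals is unaffected. Hence the net change is at least $q^2-(p-q)\cdot 6n-6q$, which is positive for the chosen values of $p$ and $q$ (it follows from $q^2>6n(p-q)$ with room to spare), contradicting maximality. Running the same computation with $A$ and $B$ swapped rules out the other case, so the left long and right long intervals are on opposite sides; combined with the previous paragraph, this is exactly the asserted partition.

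The main obstacle is the edge bookkeeping in the exchange step: one has to be sure that the only edges whose cut/uncut status changes are the ones listed -- in particular that each short interval meets only its corresponding long intervals and the link intervals covering $\mathcal{G}(v_i)$, and that the six first-manner link intervals meet the right long intervals but not the right short intervals -- and then verify that the crude estimate $q^2-(p-q)\cdot 6n-6q$ is genuinely positive for $p=2q+7n$, $q=200n^3+1$, which the stated inequalities comfortably guarantee.
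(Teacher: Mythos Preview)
Your proposal is correct and follows essentially the same strategy as the paper: use Lemmas~\ref{lem1} and~\ref{lem2} to make the four groups monochromatic, observe that each short block sits opposite its corresponding long block, and then rule out the ``both long blocks on the same side'' configuration by swapping the right long and right short blocks and showing the $q^2$ gain from the newly cut left-long/right-long edges dominates the at most $(p-q)6n+6q$ loss from link intervals. The paper's write-up cites the proof of Lemma~\ref{lem2} for the short-opposite-long step and folds the $6q$ first-manner contribution into the bound $(p-q)6n$ (which works since $p-q\ge q$), but these are cosmetic differences; your edge bookkeeping is accurate and the inequality you need holds with ample room.
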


\begin{proof}
 Consider a maximum cut of $G'$ that partitions its vertices into subsets $A$ and $B$. Then without loss of generality, by Lemma \ref{lem1}, either (a) all left short intervals of $\mathcal{G}(v_i)$ are in $A$ and all right short intervals of $\mathcal{G}(v_i)$ are in $B$, or (b) all the short intervals of $\mathcal{G}(v_i)$ are in $A$. 
 
 First consider Case (a), i.e. $\mathcal{G}(v_i)$ has all its left short intervals in $A$ and right short intervals in $B$. Then it follows from the proof of Lemma \ref{lem2} that all left long intervals of $\mathcal{G}(v_i)$ are  in $B$ and all right long intervals of $\mathcal{G}(v_i)$ must be in $A$, as claimed. 
  
  Now consider Case (b), i.e. $\mathcal{G}(v_i)$ has all its short intervals in $A$.  Since all the short intervals of $\mathcal{G}(v_i)$ are in $A$, it implies from the proof of \ref{lem2} that  all the long intervals of $\mathcal{G}(v_i)$ are in $B$.  We move all the right short intervals of $\mathcal{G}(v_i)$ to $B$ and all right long intervals of $\mathcal{G}(v_i)$ to $A$. Due to their intersections with link intervals, this removes at most $(p-q)6n$ edges from the cut. But due to the intersections among the left and right long intervals, it also adds at least $q^2$ edges to the cut. Since by our choice of $q$ and $p$, we have $q^2 - (p-q)6n > 0$, the total number of edges in the cut increases. This contradicts the fact that the partition yields a maximum cut and hence this case is impossible.
\end{proof}

It can be seen that V-gadgets and E-gadgets are structurally similar, and only their intersections with the link intervals can possibly be the cause of any different partitioning in a maximum cut. We address this point in the following lemma and show that E-gadgets too in fact admit a partition similar to that of V-gadgets.
 
\begin{lemma} \label{lem5}
 Lemma \ref{lem3} holds for E-gadgets of $G'$ as well.
\end{lemma}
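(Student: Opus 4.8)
The plan is to transcribe, almost verbatim, the three-step argument of Lemmas~\ref{lem1}--\ref{lem3} to an E-gadget, replacing the V-gadget parameters $q,p$ by the E-gadget parameters $q',p'$ and adjusting the count of incident link intervals. Fix a maximum cut $(A,B)$ of $G'$ and an E-gadget $\mathcal{G}(e)$. From the construction, a left short interval of $\mathcal{G}(e)$ meets only the $q'$ left long intervals of $\mathcal{G}(e)$, the link intervals covering $\mathcal{G}(e)$ (there are $4k$ of them with $0\le k\le 3n/2-1$, hence at most $6n$), and the two link intervals that intersect $\mathcal{G}(e)$ in the third manner; a right short interval meets only the $q'$ right long intervals and the covering link intervals; a left long interval meets $2q'-1$ other long intervals of $\mathcal{G}(e)$, the $p'$ left short intervals, the $4k\le 6n-4$ covering link intervals, the two second-manner link intervals and the two third-manner link intervals (so at most $6n$ link intervals in all); and a right long interval meets $2q'-1$ other long intervals of $\mathcal{G}(e)$, the $p'$ right short intervals and the $4k$ covering link intervals. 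The crucial structural point is that no second- or third-manner link interval reaches the right side of $\mathcal{G}(e)$.

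First I would establish the analogue of Lemma~\ref{lem1}: all left short intervals of $\mathcal{G}(e)$ lie on one side, and likewise for the right short intervals. For the left short intervals the argument is identical to that of Lemma~\ref{lem1}, using that the number of left long, covering, and third-manner link intervals together is $q'+4k+2$, which is odd since $q'=10n^2+1$ is odd; hence one side strictly dominates and a split left short interval could be moved to strictly increase the cut. For the right short intervals the relevant total is $q'+4k$, again odd. Next I would establish the analogue of Lemma~\ref{lem2}: all left long intervals of $\mathcal{G}(e)$ lie on one side, and likewise for the right long intervals. Exactly as in Lemma~\ref{lem2}, if a left long interval lies on the same side as the (now uniformly placed) left short intervals, moving it to the other side costs at most $2q'-1+6n$ cut edges and gains $p'$ of them; since $p'=2q'+7n$, we have $p'-(2q'-1+6n)=n+1>0$, a contradiction, and symmetrically for the right long intervals.

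It then remains to carry out the analogue of Lemma~\ref{lem3}. By the analogue of Lemma~\ref{lem1} we may assume that either (a) all left short intervals of $\mathcal{G}(e)$ lie in $A$ and all right short intervals lie in $B$, or (b) all short intervals lie in $A$. In Case~(a), the analogue of Lemma~\ref{lem2} forces all left long intervals into $B$ and all right long intervals into $A$, which is precisely the claimed partition. In Case~(b) all long intervals lie in $B$; I would then move every right short interval to $B$ and every right long interval to $A$. Because second- and third-manner link intervals never touch the right side of $\mathcal{G}(e)$, the only link intervals affected are the at most $6n$ covering ones, and each such link interval changes the cut by at most $p'-q'$; so the move deletes at most $(p'-q')6n$ cut edges, while it creates all $q'^2$ edges joining a left long interval to a right long interval of $\mathcal{G}(e)$. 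Since the parameters satisfy $q'^2>(p'-q')6n$, this is a net increase, contradicting maximality; hence Case~(b) is impossible and the lemma follows.

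The step needing the most care is the link-interval bookkeeping specific to E-gadgets: one must check the parity of $q'+4k+2$ and of $q'+4k$ so that Step~1 goes through, that a left long interval of an E-gadget meets at most $6n$ link intervals so that $p'=2q'+7n$ dominates in Step~2, and above all that every link interval meeting an E-gadget in the second or third manner lies entirely to the left of its right long and right short intervals, so that the ``flip the right half'' move of Step~3 is perturbed only by covering link intervals --- exactly the situation handled for V-gadgets in Lemma~\ref{lem3}. Granting these observations, the rest is a verbatim repetition of the proofs of Lemmas~\ref{lem1}--\ref{lem3} with $q,p$ replaced by $q',p'$ and the inequality $q^2>(p-q)6n$ replaced by $q'^2>(p'-q')6n$.
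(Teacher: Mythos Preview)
Your proposal is correct and follows essentially the same approach as the paper: both redo the three-step argument of Lemmas~\ref{lem1}--\ref{lem3} for E-gadgets with $q',p'$ in place of $q,p$, adjusting the parity count for the left short intervals to include the two third-manner link intervals and invoking the inequality $q'^2>(p'-q')6n$ for the final flip. Your version is in fact more explicit than the paper's in tracking the link-interval bookkeeping (e.g., that second- and third-manner links never reach the right side of an E-gadget), but the underlying argument is the same.
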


\begin{proof}

Consider a maximum cut of $G'$ that partitions its vertices into subsets $A$ and $B$. We modify the proof of Lemma \ref{lem1} a little, so that Lemma \ref{lem1} holds for E-gadgets as well. Consider an E-gadget $\mathcal{G}(e_i)$ of $G'$. Observe that the proof holds for the right short intervals of $\mathcal{G}(e_i)$, since any link interval that intersects the right short intervals of an E-gadget, must also cover the E-gadget. But the left short intervals of each E-gadget are intersected by two link intervals in the third manner. Then denote by $OL'^A_i$ (resp. $OL'^B_i$) the set of all link intervals that cover $\mathcal{G}(v_i)$ or intersect $\mathcal{G}(v_i)$ in the third manner, and lie in subset $A$ (resp. $B$). Let $LL^A_i$ and $LL^B_i$ denote the subset of left long intervals of $\mathcal{G}(v_i)$ in $A$ and $B$ respectively, as before. Again, without loss of generality we have the following inequality.
 $$\mid LL^A_i \mid + \mid OL'^A_i \mid > \mid LL^B_i \mid + \mid OL'^B_i \mid$$
 The rest of the proof is similar to that of \ref{lem1}, and it can be seen that the claim holds. The proof of Lemma \ref{lem2} for E-gadgets remains the same as for V-gadgets. Lemmas \ref{lem1} and \ref{lem2} along with the choice of $p'$ and $q'$, imply Lemma \ref{lem3} for E-gadgets as well. 
\end{proof}

\begin{lemma} \label{lem6}
 $G$ has a cut of size at least $x$ if and only if $G'$ has a cut of size at least $(2pq + q^2)n +  \frac{3}{2}(2p'q' + q'^2)n +  3(n-1)(n-2)(p+q) + 3n(\frac{3}{2}n-1) (p'+q') + 6nq + 3np' + 2xq'$.
\end{lemma}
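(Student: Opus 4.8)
The plan is to prove the two directions of the equivalence separately, with both hinging on the structural description of maximum cuts of $G'$ established in Lemmas~\ref{lem1}--\ref{lem5}. By those lemmas, in any maximum cut $(A,B)$ of $G'$ every V-gadget $\mathcal{G}(v_i)$ has its left long intervals and right short intervals on one side and its right long intervals and left short intervals on the other, and the same holds for every E-gadget. So each gadget is in one of exactly two ``states''; I will encode the state of $\mathcal{G}(v_i)$ by which side its right long intervals lie on, and similarly for E-gadgets. The link intervals are then the only remaining free vertices, and the value of the cut will be a function of the gadget states plus an optimizing choice for the link intervals.

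\medskip
\noindent\emph{(Easy direction: $G$ has a cut of size $\geq x$ $\Rightarrow$ $G'$ has a cut of size $\geq$ the claimed bound.)}
Given a partition $(V_A,V_B)$ of $V(G)$ realizing a cut of size $\geq x$ in $G$, I would build a partition of $G'$ as follows. For $v_i\in V_A$ put the right long intervals of $\mathcal{G}(v_i)$ and the left short intervals in $A$ (and the complementary halves in $B$); for $v_i\in V_B$ do the reverse. For each edge $e=(v_i,v_j)$, choose the state of $\mathcal{G}(e)$ and the sides of the four link intervals associated with $e$ so as to maximize the contribution of that edge's ``local'' structure; one checks that when $e$ is a cut edge of $G$ (its endpoints are on opposite sides) this local configuration contributes $2q'$ more than when $e$ is not a cut edge, because the two link intervals hitting $\mathcal{G}(v_i)$ in the first manner and the two hitting $\mathcal{G}(v_j)$ in the first manner can all be placed to cut the $q$ right long intervals on their respective V-gadget side only when those two V-gadgets are in opposite states. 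Summing: each V-gadget contributes its internal $2pq+q^2$, each E-gadget its internal $2p'q'+q'^2$; the link intervals covering gadgets contribute the $3(n-1)(n-2)(p+q)$ and $3n(\tfrac32 n-1)(p'+q')$ terms (each covering link interval can be placed to cut the entire gadget it covers, and the counts $6k$, $4k$ were arranged so these sums telescope to the stated quantities); the first-manner link intervals contribute $6nq$; the second-manner link intervals contribute $3np'$; and the third-manner link intervals contribute $3np' \;(\text{folded in})$ together with the $+2q'$-per-cut-edge bonus giving $2xq'$ when $x$ edges of $G$ are cut. This yields a cut of $G'$ of exactly the claimed size, hence at least it.

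\medskip
\noindent\emph{(Hard direction: $G'$ has a cut of size $\geq$ the claimed bound $\Rightarrow$ $G$ has a cut of size $\geq x$.)}
Take a \emph{maximum} cut of $G'$; its size is at least the claimed bound. By Lemmas~\ref{lem1}--\ref{lem5} every gadget is in one of its two states, so the cut induces a partition $(V_A,V_B)$ of $V(G)$ (via the states of the V-gadgets). I will show this partition cuts at least $x$ edges of $G$. The argument is to account for the cut value term by term: the internal gadget edges contribute exactly $(2pq+q^2)n + \tfrac32(2p'q'+q'^2)n$ regardless of states; for the link intervals, I bound the contribution of each link interval by the number of intervals it can possibly cut, and observe that the covering link intervals contribute at most $3(n-1)(n-2)(p+q) + 3n(\tfrac32 n-1)(p'+q')$ and the first-/second-/third-manner link intervals at most $6nq + 3np' + 3np'$, with the crucial point that the four link intervals attached to an edge $e=(v_i,v_j)$ can contribute their full potential $2q' $ \emph{extra} (over the baseline $3p'$ folded into $3np'$) only if $\mathcal{G}(v_i)$ and $\mathcal{G}(v_j)$ are in opposite states, i.e.\ only if $e$ is cut by $(V_A,V_B)$; for non-cut edges the best achievable is strictly smaller. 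Hence if $(V_A,V_B)$ cut fewer than $x$ edges, the total cut value of $G'$ would be strictly below the claimed bound, a contradiction.

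\medskip
The main obstacle is the hard direction, specifically the bookkeeping that shows no ``cheating'' configuration of the link intervals can recover the $2q'$ per-edge bonus when the incident V-gadgets are in the same state. This is exactly where the numerical separations in item~3 of the construction are needed: inequalities (b)--(d) ($q^2 > (p-q)6n$, $q'^2 > (p'-q')6n$, $q > 3(p'+q')n + 9n^2$) guarantee that any deviation of a link interval or a gadget half from the ``intended'' placement loses more than it could gain, so the maximum cut is forced into the canonical form, and (a) keeps the various gadget sizes comparable. I would isolate, for a single edge $e$, a ``local gadget'' subgraph (the two V-gadgets' relevant halves, $\mathcal{G}(e)$, and the four link intervals) and prove a self-contained inequality: its contribution to any cut is at most $3p' + 2q'$ if $e$ is cut and at most $3p' + 2q' - \Theta(q')$... (more precisely, at most $3p'$) if not — then sum over all edges. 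Verifying this local inequality, and checking that the global maximum decomposes cleanly into these local contributions plus state-independent internal terms, is the technical heart; everything else is arithmetic matching the constant in Lemma~\ref{lem6}.
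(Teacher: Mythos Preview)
Your overall architecture matches the paper's: build an explicit cut of $G'$ from a cut of $G$ for the forward direction; for the backward direction take a maximum cut of $G'$, use Lemmas~\ref{lem3} and~\ref{lem5} to put every gadget into one of two canonical states, read off a partition of $V(G)$ from the V-gadget states, and then bound the cut value term by term. That is exactly what the paper does.

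However, you have misidentified the mechanism that produces the $2q'$-per-cut-edge bonus, and this is a genuine gap. You write that the four link intervals of an edge $e=(v_i,v_j)$ ``can all be placed to cut the $q$ right long intervals on their respective V-gadget side only when those two V-gadgets are in opposite states.'' This is false: the pair $\ell_i,\ell'_i$ hitting $\mathcal{G}(v_i)$ in the first manner and the pair $\ell_j,\ell'_j$ hitting $\mathcal{G}(v_j)$ in the first manner are four distinct intervals, and each pair can be placed opposite to its own V-gadget's right long intervals independently of the other V-gadget's state. The first-manner contribution is therefore always $6nq$, regardless of which vertices of $G$ are cut. The paper's argument runs the other way: in the backward direction one first uses inequality~(d), $q>3(p'+q')n+9n^2$, to show that a maximum cut \emph{must} realize all $6nq$ first-manner edges, which \emph{forces} the side of every link interval to agree with the state of its V-gadget. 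Only then does one look at the E-gadget end of those same link intervals: for $\mathcal{G}(e)$ the pairs $\ell_i,\ell'_i$ (second manner) and $\ell_j,\ell'_j$ (third manner) now have predetermined sides, and a four-case analysis over the two possible states of $\mathcal{G}(e)$ shows the partial-intersection contribution at $\mathcal{G}(e)$ is at most $2p'$ when $\ell_i,\ell'_i,\ell_j,\ell'_j$ are on the same side (i.e.\ $v_i,v_j$ in the same part of $G$) and at most $2p'+2q'$ when they are split. Your local bound of ``$3p'$ vs.\ $3p'+2q'$'' has the wrong baseline, and your forward-direction accounting (attributing $3np'$ to second-manner and folding another $3np'$ into third-manner) is similarly off; the correct per-edge baseline is $2p'$, summing to $3np'$ over the $\tfrac{3}{2}n$ edges. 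Finally, the residual discrepancy $2(x-y)q'$ is killed not by a per-edge local inequality but by the global observation that the only remaining source of cut edges is link--link intersections, of which there are at most $9n^2<2q'$.
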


\begin{proof}
First suppose that $G$ has a cut of size at least $x$. Denote the subsets in the partition of the vertices of $G$ by $C$ and $D$. We partition the vertices of $G'$ as follows. If a vertex $v_i$ of $G$ is in $C$, then in the corresponding V-gadget $\mathcal{G}(v_i)$ of $G'$, all left short intervals and right long intervals are placed in $A$, all right short intervals and left long intervals are placed in $B$. Finally, all link intervals intersecting $\mathcal{G}(v_i)$ in the first manner are placed in $B$. If $v_i$ is in $D$ instead, then all the above placements of intervals are swapped. Recall that for each E-gadget exactly two link intervals intersect it in the second manner and exactly two link intervals intersect it in the third manner. If the link intervals that intersect an E-gadget in the third manner is in $A$, then we place the left short intervals and right long intervals of the E-gadget in $B$, and the left long intervals and right short intervals in $A$. If the link intervals are in $B$, then the placements of the intervals are swapped. 

Due to the above placement of intervals in $A$ and $B$, the number of cut edges obtained internally from all the V-gadgets and E-gadgets of $G'$ are $(2pq + q^2)n$ and $\frac{3}{2}(2p'q' + q'^2)n$ respectively. The number of cut edges formed by the V-gadgets and the link intervals that cover them is $3(n-1)(n-2)(p+q)$. The number of cut edges formed by the E-gadgets and the link intervals covering them is $3n(\frac{3}{2}n-1) (p'+q')$. For each V-gadget, the link intervals intersecting it in the first manner give $6q$ cut edges, resulting in a total of $6nq$ cut edges. Each link interval that intersects an E-gadget in the third manner gives $p'$ cut edges, thus we have $3np'$ in total. However, a link interval that intersects an E-gadget in the second manner can produce cut edges from the E-gadget only when the other link interval mentioned above is in a different subset, i.e. the vertices of $G$ corresponding to the V-gadgets of these  link intervals are in $C$ and $D$, and produce a cut edge. This means that such link intervals produce at least $2xq'$ cut edges in total, proving the forward direction of the claim.

Now we prove the backward direction of the claim. Assume that $G'$ has a cut of size at least $(2pq + q^2)n +  \frac{3}{2}(2p'q' + q'^2)n +  3(n-1)(n-2)(p+q) + 3n(\frac{3}{2}n-1) (p'+q') + 6nq + 3np' + 2xq'$. So the size of a maximum cut of $G'$ is at least this much. Consider a maximum cut of $G'$ that partitions its intervals into two disjoint subsets $A$ and $B$. By Lemma \ref{lem3}, for each V-gadget, all the left long and right short intervals are in one subset, while all the right long and left short intervals are in the other. Corresponding to this cut of $G'$, we define a cut of $G$ in the following way. If the left long and right short intervals of  $\mathcal{G}(v_i)$ are in $A$ (resp. $B$), then we put $v_i$ in $C$ (resp. $D$). Let $y$ be the size of the cut $C \cup D$. We have to show that $y \geq x$.  

Due to Lemma \ref{lem3} and \ref{lem5}, the internal cut edges of V-gadgets and E-gadgets, and the cut edges formed between gadgets and the link intervals that cover them amount to $(2pq + q^2)n +  \frac{3}{2}(2p'q' + q'^2)n +  3(n-1)(n-2)(p+q) + 3n(\frac{3}{2}n-1)(p'+q')$ cut edges in total. Hence, the remaining $6nq + 3np' + 2xq'$ cut edges are obtained from the partial intersections of the link intervals with the V-gadgets and E-gadgets, and the intersections among link intervals. The number of cut edges among the link intervals is not more than $(3n)^2 = 9n^2$. The partial intersections between link intervals and V-gadgets can contribute at most $6nq$ cut edges. Note that the partial intersections between link intervals and E-gadgets, and intersections among the link intervals cannot give more than $3(p' + q')n + 9n^2$ cut edges. Since $q > 3(p' + q')n + 9n^2$, it implies that exactly $6nq$ of the remaining cut edges are obtained from  link intervals intersecting V-gadgets in the first manner. This happens when for each V-gadget, the link intervals intersecting it in the first manner are all in the subset which contains the left long and right short intervals of the gadget. Hence, the placement of the intervals of the V-gadget in the subsets $A$ and $B$ (and hence the placement of the corresponding vertex of $G$ in $C$ or $D$) determines the placements of the link intervals.

The remaining $3np' + 2xq'$ cut edges should come from the partial intersections of the link intervals with the E-gadgets, and the intersections among link intervals. We show that this is not possible if $y < x$. For this, consider an E-gadget $\mathcal{G}(v_i, v_j)$. Let $\ell_i, \ell'_i$ be the two link intervals from $\mathcal{G}(v_i)$ that intersect $\mathcal{G}(v_i, v_j)$ in the second manner and  $\ell_j, \ell'_j$ be the two link intervals from $\mathcal{G}(v_j)$ that intersect $\mathcal{G}(v_i, v_j)$ in the third manner. Consider the following cases: $\ell_i, \ell'_i$, $\ell_j, \ell'_j$ are in the same subset (Case 1), say $\ell_i, \ell'_i$, $\ell_j, \ell'_j \in A$  and $\ell_i, \ell'_i$ are in one subset and $\ell_j, \ell'_j$ are in the other (Case 2), say, $\ell_i, \ell'_i \in A$, $\ell_j, \ell'_j \in B$. In Case 2, the edge $(e_i, e_j)$ appears in the cut set of $C \cup D$, while in Case 1, it does not. For each case, we have two subcases as described in the following.

\begin{description}
 \item [Case 1a] $A$ contains $\ell_i, \ell'_i$, $\ell_j, \ell'_j$ and the left long and right short intervals of $\mathcal{G}(v_i, v_j)$. $B$ contains the right long and left short intervals of $\mathcal{G}(v_i, v_j)$. Hence, the intersections between $\mathcal{G}(v_i, v_j)$ and $\ell_i, \ell'_i$, $\ell_j, \ell'_j$ give $2p'$ cut edges.

 \item [Case 1b] $A$ contains $\ell_i, \ell'_i$, $\ell_j, \ell'_j$ and the right long and left short intervals of $\mathcal{G}(v_i, v_j)$. $B$ contains the left long and right short intervals of $\mathcal{G}(v_i, v_j)$. Hence, the intersections between $\mathcal{G}(v_i, v_j)$ and $\ell_i, \ell'_i$, $\ell_j, \ell'_j$ give $4q'$ cut edges.

 \item [Case 2a] $A$ contains $\ell_i, \ell'_i$ and the left long and right short intervals of $\mathcal{G}(v_i, v_j)$. $B$ contains $\ell_j, \ell'_j$ and the right long and left short intervals of $\mathcal{G}(v_i, v_j)$. Hence, the intersections between $\mathcal{G}(v_i, v_j)$ and $\ell_i, \ell'_i$, $\ell_j, \ell'_j$ give $2q'$ cut edges.

 \item [Case 2b] $A$ contains $\ell_i, \ell'_i$ and the right long and left short intervals of $\mathcal{G}(v_i, v_j)$. $B$ contains $\ell_j, \ell'_j$ and the left long and right short intervals of $\mathcal{G}(v_i, v_j)$. Hence, the intersections between $\mathcal{G}(v_i, v_j)$ and $\ell_i, \ell'_i$, $\ell_j, \ell'_j$ give $2p' + 2q'$ cut edges.  
 \end{description}


Therefore, we see that an  E-gadget gives at most $2p'$ cut edges from its partial intersections with link intervals if the link intervals belong to the same subset (since $2p' > 4q'$), and at most $2(p'+q')$ cut edges if the link intervals belong to different subsets (since $2p' + 2q' > 2q'$). Notice that the later case occurs for exactly $y$ E-gadgets. The number of cut edges obtained from the partial intersections of E-gadgets with link intervals is at most $2p'(\frac{3n}{2}-y) + 2(p'+ q')y = 3np' + 2yq'$. Hence if $y<x$, then at least $2(x-y)q' > 2q'$ cut edges must come from the intersections among the link intervals. But this is not possible as $2q' > 9n^2$. Hence $y \geq x$ as required.  
 \end{proof}

%
%

\begin{theorem}
 \textsc{Max Cut} is NP-complete on interval graphs.
\end{theorem}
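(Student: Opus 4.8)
The plan is to assemble the pieces already established into the standard two-part NP-completeness argument. First I would observe that \textsc{Max Cut} on interval graphs is in NP: given a partition of $V'$ into $A$ and $B$, one can verify in polynomial time that the induced cut set has size at least the target value, and the target value from Lemma~\ref{lem6} is polynomial in $n$ (indeed polynomial in the size of the input cubic graph $G$), so a certificate is of polynomial size.

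Next I would argue that the reduction is polynomial. The graph $G'$ has $n$ V-gadgets each of size $O(q+p) = O(n^3)$, $3n/2$ E-gadgets each of size $O(q'+p') = O(n^2)$, and $6n$ link intervals; hence $|V'| = O(n^4)$ and $|E'|$ is likewise polynomial in $n$. All the interval endpoints can be laid out explicitly (as in Figures~\ref{fig: arrangement} and \ref{fig: connection}) in polynomial time, and the target value $f(x) = (2pq + q^2)n + \frac{3}{2}(2p'q' + q'^2)n + 3(n-1)(n-2)(p+q) + 3n(\frac{3}{2}n-1)(p'+q') + 6nq + 3np' + 2xq'$ is computable in polynomial time. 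So $(G,x) \mapsto (G', f(x))$ is a polynomial-time mapping.

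Then I would invoke correctness: by Lemma~\ref{lem6}, $G$ has a cut of size at least $x$ if and only if $G'$ has a cut of size at least $f(x)$. This is exactly the statement that $(G,x)$ is a yes-instance of \textsc{Max Cut} on cubic graphs if and only if $(G', f(x))$ is a yes-instance of \textsc{Max Cut} on interval graphs. Since \textsc{Max Cut} on cubic graphs is NP-complete \cite{berman1999some}, and $G'$ is by construction an interval graph (it is defined as the intersection graph of an explicit set of intervals on the real line), we conclude that \textsc{Max Cut} on interval graphs is NP-hard; combined with membership in NP, it is NP-complete.

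The only genuinely delicate point has already been handled inside the lemmas — namely showing that a maximum cut of $G'$ must respect the gadget structure (Lemmas~\ref{lem1}--\ref{lem5}) so that it decodes to a cut of $G$, and that the inequalities on $p, q, p', q'$ force the counting in Lemma~\ref{lem6} to be tight. So for the theorem itself there is no remaining obstacle; the proof is a short bookkeeping argument citing the established facts. I would write it as: membership in NP is immediate from the polynomial certificate; the reduction of Section~\ref{sec: reduction} is computable in polynomial time and produces an interval graph $G'$; Lemma~\ref{lem6} establishes the equivalence of instances; hence NP-hardness follows from the NP-completeness of \textsc{Max Cut} on cubic graphs, and the theorem is proved.
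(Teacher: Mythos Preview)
Your proposal is correct and follows essentially the same approach as the paper: membership in NP via polynomial-time verification of a given partition, polynomial-time constructibility of $G'$, and NP-hardness via Lemma~\ref{lem6} together with the NP-completeness of \textsc{Max Cut} on cubic graphs. The paper's own proof is terser (three sentences), but your added size estimates for $|V'|$ and the explicit form of $f(x)$ only flesh out what the paper asserts without elaboration.
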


\begin{proof}
 It can be checked in polynomial time if a given partition of an interval graph produces a cut of a given size. Thus the problem is in NP. The construction of $G'$ from $G$ clearly takes polynomial time. The NP-hardness follows from Lemma \ref{lem6}. 
\end{proof}

\section{Concluding Remarks}
 In this paper, we have settled the question of computational complexity of \textsc{Max Cut} on interval graphs. However, the question of whether \textsc{Max Cut} is polynomial-time solvable or NP-hard on unit interval graphs still remains open. For an NP-hardness reduction, a possible approach might be to reduce \textsc{Max Cut} on interval graphs to \textsc{Max Cut} on unit interval graphs. An interval can be transformed into a sequence of unit intervals by replacing it with a start and end interval, with ``bunches'' of unit intervals within (See Fig. \ref{fig: unit}). It is easy to see that for such a standalone gadget, an alternating assignment of the bunches to the two subsets yields a \textsc{Max Cut}. However, when multiple such gadgets of different sizes are brought together to represent the whole interval graph for the reduction, such a partition does not necessarily correspond to a partition in the original interval graph.

\begin{figure}[h!]
\centering

{
    \fontsize{8pt}{8pt}\selectfont
    \def\svgwidth{1\textwidth}
    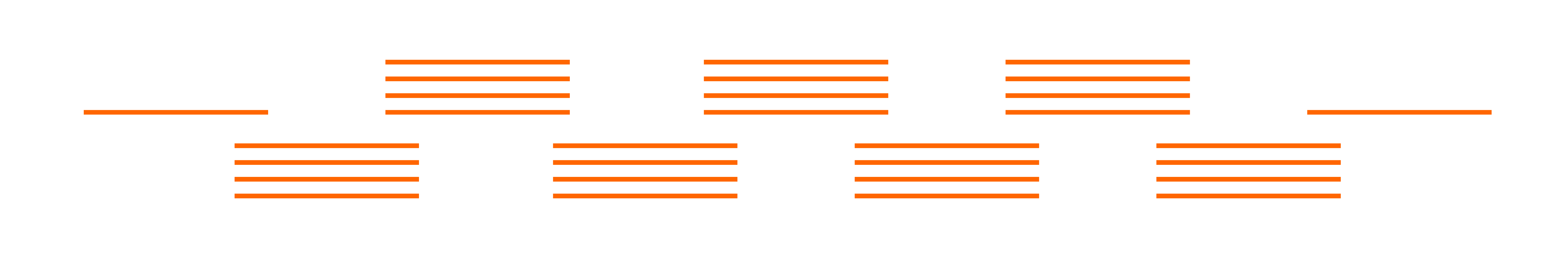
}

\caption[Short Caption]{Transformation of an interval into a sequence of unit intervals in a possible reduction from \textsc{Max Cut} on interval graphs to \textsc{Max Cut} on unit interval graphs.}
\label{fig: unit}
\end{figure}

Another direction for future work is to find approximation algorithms for \textsc{Max Cut} interval graphs. In general, polynomial-time approximation algorithm for \textsc{Max Cut} with the best known approximation ratio is by Goemans and Williamson \cite{goemans1995improved} which achieves an approximation ratio $\approx 0.878$. Assuming the Unique Games conjecture \cite{khot2002power}, this is the best possible approximation ratio. An interesting question is whether this can be bettered for interval graphs or unit interval graphs. A possible approach could be the following greedy method. We first compute a unit interval representation the graph. In the first step, the leftmost interval is put in $A$, then the leftmost interval not intersecting that interval is put in $A$, and so on. In the second step, among the remaining intervals, we consider the ones that intersect the most number of intervals put in $A$. The leftmost such interval is put in $B$, then the leftmost of them not intersecting that interval is put in $B$, and so on. We repeat this until all intervals are placed, i.e., in each odd (resp. even) step an independent set of intervals, each of which intersect the most number of intervals put in $B$ (resp $A$) thus far, are put in $A$ (resp. $B$). It is not clear to us how efficient this is, but the following is the worst example that we have found so far which gives an approximation ratio of 0.9375. Consider a graph $G = (V,E)$ with $8a$ vertices $\{v_1, \ldots, v_{8a}\}$ where the first $6a$ vertices are all adjacent to each other and the last $6a$ vertices are all adjacent to each other. The greedy algorithm gives a cut of size $15a^2$, while the maximum cut is of size $16a^2$.

\bibliographystyle{plainurl}
\bibliography{maxcut}

\end{document}